\newtheorem{theorem}{\sc Theorem}
\newtheorem{coro}{\sc Corollary}
\newtheorem{nota}{\sc Notation}
\newtheorem{defin}{\sc Definition}
\newtheorem{rem}{\sc Remark}
\newtheorem{cla}{\sc Claim}
\newtheorem{ex}{\sc Example}
\newenvironment{claim}{\begin{cla}}{\end{cla}}
\newenvironment{corollary}{\begin{coro}}{\end{coro}}
\newenvironment{definition}{\begin{defin}}{\end{defin}}
\title{On the Average-Case Complexity of Shellsort}
\author{Paul Vit\'{a}nyi\thanks{Address: CWI, Science Park 123, 
1098 XG Amsterdam, The Netherlands.
Email: {\tt paulv@cwi.nl}}\\
CWI and University of Amsterdam}
\date{}
\begin{document}
\maketitle
 
\begin{abstract}
We prove a lower bound expressed in the increment sequence on the average-case 
complexity of the number of inversions 
of Shellsort. This lower bound is sharp in every case where it 
could be checked. A special case of this lower bound yields the general
Jiang-Li-Vit\'anyi lower bound.
We obtain new results e.g. determining the average-case 
complexity precisely
in the Yao-Janson-Knuth 3-pass case.
\end{abstract}

\section{Introduction}
The question of a tight general lower bound 
or upper bound on the average-case complexity of Shellsort
(due to D.L. Shell \cite{Sh59})
has been open for more than five decades \cite{Kn73}.
We use ``average'' throughout in the sense of ``arithmetic mean of a
uniform distribution,''
and the average-case complexity is the average number of inversions.
We present an average-case lower bound on the number
of inversions for a $p$-pass Shellsort with increments
$h_1, h_2, \ldots, h_p$ for every number of passes and increment sequences.

Shellsort sorts {\em in situ} a list of $n$ keys in
$p$ passes using a sequence of increments
$h_1 , \ldots , h_p$ with $n > h_1 > \cdots > h_p$. In the $k$th pass the
main list is divided in $h_k$ separate sublists
of length about $n/h_k$: if $n=\lfloor n/h_k \rfloor +r$ then the initial $r$
sublists have length $\lfloor n/h_k \rfloor +1$ and the remaining $h_k-r$
sublists have length $\lfloor n/h_k \rfloor$. 
The $h$th sublist consists of the keys in positions
$j \bmod h_k=h$  of the main list $j \in [1,n] := \{1,\ldots , n\}$.
Every sublist is sorted using a straightforward insertion sort.
The efficiency of the method is governed
by the number of passes $p$ and
the selected increment sequence $h_1 , \ldots , h_p$
satisfying $h_p =1$ to ensure sortedness of the final list.
Shellsort can be implemented using little code and does not
use a call stack and therefore some implementations
of the qsort function of the 
C standard library targeted at embedded systems use it
instead of quicksort, it is used in the uClibc library, 
Linux kernel, and bzip2 compressor \cite{Wi16}.
A complete description of the Shellsort algorithm, 
together with animation and examples, is provided in the last
reference.

\subsection{Previous Work}
Let $\log$ denote the binary logarithm $\log_2$.
All results below concern a permutation of $n$ keys (items) 
to be sorted. For the {\em worst-case complexity} of the number of 
inversions the following is known.
The original $\log n$-pass increment sequence 
$\lfloor n/2 \rfloor , \lfloor n/4 \rfloor, \ldots , 1$ 
of Shell \cite{Sh59} uses in the worst case 
$\Theta (n^2)$ inversions,
but Papernov and Stasevich \cite{PS65} showed that 
another related increment sequence uses a worst-case 
number of inversions equal to
$\Theta(n^{3/2})$. %These are upper bounds for the increment
%sequences concerned.
Pratt \cite{Pr72} extended the argument to a class of all nearly
geometric increment sequences and proved that there are
permutations of $n$ keys to be sorted that require $\Theta(n^{3/2})$
inversions for such an increment sequence and all
such permutations can be sorted with Shellsort using such an
increment sequence in an upper bound of 
$\Theta(n^{3/2})$ inversions. Therefore
the lower bound is equal to the upper bound. 
Incerpi and Sedgewick \cite{IS85} constructed a family of
$(8/\epsilon^2)\log n$-length increment sequences for 
which Shellsort sorts all permutations of $n$ keys in 
an upper bound of $O(n^{1+ \epsilon / \sqrt{\log n}})$ 
inversions for every $\epsilon > 0$. 
Poonen \cite{Po93} proved
a $\Omega (n^{1 + c / \sqrt{p}} )$
lower bound for any number $p$ of passes of Shellsort 
using any increment sequence for some $c > 0$ and showed
that this lower bound is tight for the Incerpi/Sedgewick 
increment sequence (and one due to B. Chazelle)
for $p = \Omega (\log n)$.
Since the lower bound drops to order $n \log n$ for 
$\log^2 n/ (\log \log n)^2$ passes and every pass takes at 
least $n$ steps this shows in fact
a $\Omega (n \log^2 n/ (\log \log n)^2)$ lower bound
on the worst-case number of inversions of Shellsort for
every increment sequence.
The currently best asymptotic method was found
by  Pratt \cite{Pr72}. It uses
all $\log^2 n$ increments of the form 
$2^i 3^j < \lfloor n/2 \rfloor$
to obtain a number of inversions of $\Theta(n\log^2 n)$ in the worst
case. 
Therefore the only possibility for Shellsort to sort in $O(n \log n)$
inversions for some number of passes and increment 
sequence is on the average.
For the {\em average-case complexity} little 
is known. 
In Pratt's \cite{Pr72} method with $\log^2 n$ increments
the average-case complexity 
is $\Theta (n \log^2 n)$. 
Knuth \cite{Kn73} shows
$\Theta ( n^{5/3})$ for the average-case of $p=2$ passes
and Yao \cite{Yao80} derives an expression for
the average case for $p=3$ that does not give a
direct bound but was used by Janson and Knuth
to derive an upper bound 
of $O(n^{23/15})$ on the average-case
complexity of 3-pass Shellsort for a particular increment sequence. 
In \cite{JLV00} Jiang, Li and Vit\'anyi
derived a general lower bound of $\Omega(pn^{1+1/p})$ on the
average-case complexity of $p$-pass
Shellsort. This lower bound shows that the only
possibility of Shellsort to run on the average
in $O(n \log n)$ inversions is for the number of passes $p$ to satisfy
$p = \Theta(\log n)$. Apart from this, no nontrivial results
were known for the average case until the results presented here.
A more detailed history can be found in \cite{Kn73}.

\subsection{Present Work}
We show a lower bound on the average number of inversions
of Shellsort expressed in the increment sequence used
(Theorem~\ref{theo.shelllb}).
The proof uses the fact that 
most permutations
of $n$ keys have high Kolmogorov complexity. Since the number
of inversions in the Shellsort process is not easily amenable 
to analysis, we analyze a simpler process. 
The lower bound on the number of unit moves of
the simpler process gives a lower
bound on the number of inversions of the original process.
We show that the largest number of unit moves 
of each key in the $k$th pass
of the simpler process is less than $h_{k-1}/h_k$ where $h_1, \ldots ,h_p$
is the increment sequence and $h_0=n$ (Claim~\ref{claim.mn}). Subsequently 
it is shown using the high Kolmogorov complexity of the permutation
that most keys in each pass have a number of unit moves close to
the maximum. This gives a lower bound on the total number of unit moves of the
simpler process (Claim~\ref{claim.lbn}) and hence a lower bound
on the number of inversions of the original process. This holds for the chosen single permutation.
Since all permutations but for
a vanishing fraction (with growing $n$) have this high Kolmogorov
complexity, the lower bound on the total number of inversions holds
for the average-case of the original Shellsort process 
(Theorem~\ref{theo.shelllb}). The lower bound is possibly tight
since it coincides with all known bounds.
For 2-pass Shellsort Knuth in \cite{Kn73} determined the average-case
complexity and the new lower bound
on the average complexity coincides with it (Corollary~\ref{cor.23}).
For 3-pass Shellsort Knuth and Janson \cite{JK97}, building on the
work of Yao \cite{Yao80}, gave an upper bound on the average-case
complexity  for a particular increment sequence 
and the new lower bound coincides with this 
upper bound (Corollary~\ref{cor.23}).
This yields the new result that the average-case 
complexity of Shellsort 
for this increment sequence is now determined. They 
\cite[Section 10]{JK97} 
conjecture an upper
bound on the average-case complexity for another increment sequence. 
The lower bound on the average-case complexity established here
for this sequence coincides with this upper bound (Corollary~\ref{cor.23}).
For the logarithmic increment sequences
of Shell \cite{Sh59}, Papernov and Stasevich \cite{PS65}, Hibbard \cite{Hi63}, 
and Pratt \cite{Pr72} (also reported in \cite{Kn73}), 
the lower bound on the average-case complexity
for the respective increment sequences is 
$\Omega(n \log n)$ (Corollary~\ref{cor.log}).
No upper bound on the average-case complexity is known for any 
of these increment sequences. 
For the square logarithmic increment sequence of Pratt \cite{Pr72} the
average-case complexity is known. 
Again, the lower bound given here coincides with it
(Corollary~\ref{cor.dlog}). 
A special case of the lower bound gives the Jiang-Li-Vit\'anyi general
lower bound (Corollary~\ref{cor.jlv}). 

\section{Preliminaries}\label{sect.prel}
We use the plain Kolmogorov complexity defined 
in \cite{Ko65} and denoted by $C$ in the text \cite{LV08}. 
It deals with finite binary strings, {\em strings} for short. 
Other finite objects
can be encoded into single strings in natural ways.
The following notions and notation may not be familiar to the
reader so we briefly discuss them.
The length of a string $x$ is denoted by $l(x)$. 
The {\em empty string} of 0
bits is denoted by $\epsilon$. Thus $l(\epsilon)=0$.
Let $x$
be a natural number or finite binary string according to the 
correspondence 
\[
( \epsilon , 0),  (0,1),  (1,2), (00,3), (01,4), 
(10,5), (11,6), \ldots .
\] 
Then $l(x)= \lfloor \log (x + 1) \rfloor$. 
The Kolmogorov complexity $C(x)$ of $x$ is the length of a shortest
string $x^*$ such that $x$ can be computed from $x^*$ by a
fixed universal Turing machine (of a special type called ``optimal'' 
to exclude undesirable such machines). In this way $C(x)$ is
a definite natural number associated with $x$ and
a lower bound on the length of a compressed version of it by any
known or as yet unknown compression algorithm. We also use the
conditional version $C(x|y)$.

A {\em pairing function} uniquely encodes two natural numbers into 
a single natural number by a primitive recursive bijection. 
One of the best-known is the computationally invertible 
Cantor pairing function defined by 
$\gamma(a,b) = \frac{1}{2} (a+b)(a+b+1)+b$.
This pairing function is inductively generalized to the 
Cantor tuple function $\gamma^n(a, \ldots , y,z) := 
\gamma(\gamma^{n-1}(a, \ldots , y),z)$. We use this to encode 
finite sequences of finite objects in a single natural number. 

Let ${\cal A}$ be a finite set of objects. 
We denote the cardinality of ${\cal A}$ by $|{\cal A}|$
(confusion with the absolute value notation is avoided by the context). 
The incompressibility method \cite[Chapter 6]{LV08} is used here as follows.
Let $n$ be a positive integer and $f$ an integer function such 
that $f(n)= \lceil \log |{\cal A}| \rceil$. 
Fix a $y$ (possibly $y \not\in {\cal A}$). 
We prove a certain property for a particular 
$x \in {\cal A}$ using (only)
its high Kolmogorov complexity $C(x|{\cal A},y) \geq f(n)-g(n)$ 
for a function $g(n) = o(f(n))$ and
$\lim_{n \rightarrow \infty} g(n) = \infty$.
How many $x\in {\cal A}$ are there such that $C(x|{\cal A},y)$ satisfies this lower bound?
Since there are
at most $\sum_{i=0}^{f(n)-g(n)-1} 2^i = 2^{f(n)-g(n)}-1$ 
binary programs of length
less than $f(n)-g(n)$ there are at least $(1-2^{-g(n)})2^{f(n)}+1$ 
objects $x \in {\cal A}$ such that $C(x|{\cal A},y) \geq f(n)-g(n)$. Since 
$\lim_{n \rightarrow \infty} (1-2^{-g(n)})2^{f(n)}=2^{f(n)}$ all 
but a vanishing fraction of the objects in ${\cal A}$ possess the property
involved with growing~$n$.  
%This is used in the central equation \eqref{eq.compl}.

It is customary to use ``additive constant $c$'' or
equivalently ``additive $O(1)$ term'' to mean a constant,
accounting for the length of a fixed binary program,
independent from every variable or parameter in the expression
in which it occurs.

\section{The Lower Bound}
A Shellsort computation consists essentially of a sequence of 
comparison and inversion (swapping) operations. We count the total number of
data movements (here inversions). 
The lower bound obtained below holds {\em a fortiori} for the number of 
comparisons---the algorithm must compare a pair of keys to decide whether 
or not to swap them. In practice the running time of the algorithm is
proportional to the number of inversions \cite{Kn73}.
Keys in the input permutation go by inversions to their final
destination. The sequences of inversions constitute insertion paths.
The proof is based on the following
intuition.  There are $n!$ different permutations of $n$ keys. Given the 
sorting process (the insertion paths in the right order) one can
recover the original permutation from the sorted list.
The length of a computable description of the sorting process
must be at least as
great as the Kolmogorov complexity of the starting permutation. 
The overwhelming majority of permutations have high Kolmogorov complexity.
Hence the overwhelming majority of sorting processes must have 
computable descriptions of at least a certain length.
Therefore the average sorting process has a 
computable description of that length which translates in the 
number of inversions.
The average number of inversions below is the expectation of the number
of inversions in the Shellsort sorting process when the permutations 
of $n$ keys are uniformly distributed.
 
\begin{theorem}\label{theo.shelllb}
Let for a Shellsort algorithm the sequence $h_1, \ldots , h_p$ be 
the increment sequence and $n$ be 
the number of keys in the list to be sorted.
The average number of inversions is 
$\Omega \left( n\sum_{k=1}^p h_{k-1}/h_k \right)$
where $h_0=n$. {\rm (}The proof shows this lower bound 
for all permutations of $n$ keys  
with probability going to 1 for $n \rightarrow \infty${\rm )}.
\end{theorem}

\begin{proof}
Let the list to be sorted consist of a permutation
$\sigma_0$ of the keys $1, \ldots , n$.
Let $A$ be a $p$-pass Shellsort algorithm with increments
$h_1 , \ldots , h_p$ such that
$h_k$ is the increment in the $k$th pass and $h_p=1$.
Denote the
permutation resulting from pass $k$ by $\sigma_k$. 
In each permutation the keys are ordered left-to-right. In
the final permutation $\sigma_p=12\ldots n$ the least key 1 is on
the left end and the greatest key $n$ is on the right end.

For $k=1,2, \ldots ,p$, the $k$th pass starts from $\sigma_{k-1}$ 
and this list (or permutation)
is divided into $h_k$ separate sublists or {\em $h_k$-chains} 
of length about $n/h_k$: 
if $n=\lfloor n/h_k \rfloor +r$ then the initial $r$
sublists have length $\lfloor n/h_k \rfloor +1$ and the remaining $h_k-r$
sublists have length $\lfloor n/h_k \rfloor$.
The $h$th $h_k$-chain ($1 \leq h \leq h_k$) consists of 
the keys in positions
$j \bmod h_k=h$ where $j$ is a position in the main list $j \in [1,n]$.
The insertion sort of an $h_k$-chain goes as follows. 
We start at the left end.
If the second key is less than the first key
then the second key is swapped with the first key. Otherwise
nothing happens. This creates a new $h_k$-chain.
If the third key is smaller than the first key or the second key
in the  new $h_k$-chain, then the third key is inserted in its correct position
in the $<$-order before the first key or in between the first key and
the second key. Otherwise nothing happens. We continue this way.
The $i$th key is inserted in its correct position in the $<$-order 
in the initial segment of the current $h_k$-chain consisting of the
first key through the $(i-1)$th key. All keys greater than the $i$th
key in this initial segment move one position to the right. This is possible
since the inserted key left a gap at the $i$th position of the current
$h_k$-chain. An
{\em inversion} is a swap of
key $i$ with key $j$ which changes list $\ldots ji \ldots$ to list
$\ldots ij\ldots .$ We can 
view the insertion above as the $i$th key changing 
place with the key before
it (an inversion), then changing place with the key 
before that (a second
inversion), and so on, until it ends up in its correct position.
The inversions involved are called its {\em insertion path}.  
By the time the final key is inserted in its correct position in the 
$<$-order the $h_k$-chain involved is sorted. 

All keys $i=1,2, \ldots ,n$
reside in a $h_k$-chain. 
Let $m_{i,k}$ be the number of inversions of key $i$ in its $h_k$-chain
in this sorting process. At the end of the sorting process the $h_k$-many
$h_k$-chains are merged to establish permutation $\sigma_k$ by   
putting the $j$th key of the $h$th sorted $h_k$-chain into
position $h+(j-1)h_k$ of permutation $\sigma_k$ ($1 \leq h \leq h_k$).
This process takes place for passes $k \in [1,p]$ resulting in 
the final permutation $\sigma_p=12 \ldots n$.
%That is, the $j$th position of a $h_k$-chain has the same position
%in permutation $\sigma_{k-1}$ as it has in permutation $\sigma_k$.
The sum 
\begin{equation}\label{eq.tni}
T=\sum_{i=1}^n \sum_{k=1}^p m_{i,k}
\end{equation}
is the total number of inversions that algorithm $A$ performs.

\begin{definition}\label{def.not}
\rm
Let $n$, $\sigma_0$, and the increment sequence 
$h_1, \ldots ,h_p$ with $h_p=1$ be 
as described above. 
At the start let key $i \in [1,n]$ be in position 
$p(i) \in [1,n]$ of $\sigma_0$.
For each $i$ define the 
$n_{i,k}$'s ($k \in [1,p]$) by the displacement $p(i)-i$:
\begin{itemize}
\item
If $p(i)-i > 0$ then
$\sum_{k=1}^p n_{i,k}h_k = p(i)-i$ with each $n_{i,k} \geq 0$ and 
$\sum_{k=1}^p n_{i,k}$ minimal. In this way $p(i)-i$ is represented
in a mixed radix system as $\sum_{k=1}^p n_{i,k}h_k$.
\item
If $p(i)-i < 0$ then
$\sum_{k=1}^p n_{i,k}h_k = p(i)-i$ with each $n_{i,k} \leq 0$ and 
$\sum_{k=1}^p|n_{i,k}|$ minimal.
In this way $p(i)-i$ is represented
in a mixed radix system as $\sum_{k=1}^p n_{i,k}h_k$ with
non-positive coefficients $n_{i,k}$.
\item
If $p(i)-i = 0$ then $n_{i,k}=0$ for
all $k$ ($k \in [1,p]$). 
\end{itemize}
\noindent
The sequence of integers $n_{1,1}, \ldots , n_{n,p}$ 
is the {\em minor sequence}. 
We define $N_i= \sum_{k=1}^p n_{i,k}$ for all 
$i \in [1,n]$ and $N=\sum_{i=1}^n |N_i|$. 
\end{definition}

\begin{claim}\label{lem.descr}
Given $n,A$ and the minor sequence 
we can computably reconstruct the original permutation $\sigma_0$.
\end{claim}
\begin{proof}
From the minor sequence and algorithm $A$ 
(containing the increment sequence we need) 
we can compute the displacements 
$p(1)-1, p(2)-2, \ldots , p(n)-n$
and therefore the permutation $\sigma_0$ from $12\ldots n$.
\end{proof}

\begin{claim}\label{claim.mn}
\rm
(i) $\frac{1}{2}N \leq T$.

(ii) %$h_{k-1} > \sum_{j=k}^p |n_{i,j}|h_j$ 
With $h_0=n$ we have $|n_{i,k}| < h_{k-1}/h_k$
for all $i$ and $k$  ($i \in [1,n]$, $k \in [1,p]$).

(iii) For every $i \in [1,n]$ 
one can compute the $n_{i,k}$'s
in the order $n_{i,1}, \ldots , n_{i,p}$
from distance $p(i)-i$ with an algorithm of $O(1)$ bits.
\end{claim}
\begin{proof}
(i) By Definition~\ref{def.not} the quantity 
$\sum_{i=1}^n |p(i)-i|$ is the required sum of the distances the keys
have to travel from their positions in $\sigma_0$ to their
final positions in $\sigma_p = 12\ldots n$. 
Each $p(i)-i=\sum_{k=1}^p n_{i,k}h_k$ 
is expressed as the sum  of terms with
coefficients $n_{i,k}$ ($k \in [1,p]$) of a mixed 
radix representation with radices $h_1, \ldots ,h_p$.
Because of Definition~\ref{def.not} for every $i \in [1,n]$ we have
the sum $|N_i|=\sum_{k=1}^p |n_{i,k}|$ 
minimal for the coefficients of such a radix representation for
each distance $|p(i)-i|$.  
The number of inversions of each key $i\in[1,n]$ in the sorting 
process of Shellsort consists of $\sum_{k=1}^p m_{i,k}$ of 
the coefficients in $\sum_{k=1}^p m_{i,k}h_k$.
A {\em unit move} of key $i$ is the absolute value of 
a unit of an integer $n_{i,k}$. 
In the Shellsort sorting process keys move by inversions.
Since every inversion moves one key one position
forward and an adjacent key one position backward 
in its $h_k$-chain ($k \in [1,p]$) 
it is a pair of dependent unit moves equal to at most 
two independent unit moves. 
%Therefore $|N_i|=\sum_{k=1}^p |n_{i,k}|h_k \leq
%2\sum_{k=1}^p m_{i,k}h_k$ for each $i \in [1,n]$.
Hence $N=\sum_{i=1}^n|N_i|$ is smaller or equal to
$2T=2\sum_{i=1}^n\sum_{k=1}^p m_{i,k}$.

(ii) Assume by way of contradiction that
there exist $i,k$ ($i \in [1,n]$, $k \in [1,p]$)
%with $k$ greatest 
such that $|n_{i,k}| \geq  h_{k-1}/h_k$.
Suppose $k=1$. Since $h_0=n$ we have $|n_{i,1}|h_1 \geq n$. Therefore 
$|p(i)-i| \geq n$ which is impossible. Hence $k \in [2,p]$.
Let $n_{i,k}$ be positive. Since $n_{i,k}$ is integer this implies
$n_{i,k} \geq  \lceil h_{k-1}/h_k \rceil$.
Define $n'_{i,k-1} := n_{i,k-1} +1$
and $n'_{i,k} := n_{i,k}- \lceil h_{k-1}/h_k \rceil$
while $n'_{i,h}=n_{i,h}$ otherwise.
Since $h_{k-1} > h_k$ we have $\lceil h_{k-1}/h_k \rceil > 1$
and therefore $\sum_{j=1}^p n'_{i,j} < \sum_{j=1}^p n_{i,j} = N_i$
contradicting the minimality of $N_i$. 
Let $n_{i,k}$ be negative. Since $n_{i,k}$ is integer this implies
$n_{i,k} \leq  -\lceil h_{k-1}/h_k \rceil$.
Define $n'_{i,k-1} := n_{i,k-1} -1$
and $n'_{i,k} := n_{i,k}+ \lceil h_{k-1}/h_k \rceil$
while $n'_{i,h}=n_{i,h}$ otherwise.
Since $h_{k-1} > h_k$ we have $\lceil h_{k-1}/h_k \rceil > 1$
and therefore $\sum_{j=1}^p |n'_{i,j}| < \sum_{j=1}^p |n_{i,j}| = |N_i|$
contradicting the minimality of $|N_i|$.

(iii) The representation $\sum_{k=1}^p n_{i,k}h_k$ 
of $p(i)-i$ for every $i$ ($i \in [1,n]$) is 
represented in a mixed radix system with
radices $h_1,h_2, \ldots , h_p$ and the $n_{i,k}$ of the same sign 
(or 0) for all $k \in [1,p]$. 
The question asked is whether this 
representation can be uniquely retrieved from $p(i)-i$. 
Computing the minimum of $\sum_{k=1}^p n_{i,k}$ for sequences
$n_{i,1}, \ldots , n_{i,p}$ satisfying $\sum_{k=1}^p n_{i,k}h_k=p(i)-i$ 
given $h_1,h_2, \ldots , h_p$ can be done by a
program of constant length by trying all finitely many possibilities.
\end{proof}

There are $n! \approx \sqrt{2\pi n} {n \choose e}^n$ 
permutations of $n$ keys
by Stirling's approximation. This implies 
$\log n! \approx n \log n - 1.5n$.
Choose the permutation $\sigma_0$ such that its conditional Kolmogorov
complexity (Section~\ref{sect.prel}) satisfies
\begin{equation}\label{eq.compl}
C(\sigma_0 | n,A,P) \geq n \log n - 3n,
\end{equation}
with fixed $n,A,P$, where from $n$ we determine the set ${\cal A}$ of all
permutations of $n$ keys such that $\sigma_0 \in {\cal A}$, 
the algorithm $A$ is used in 
this $p$-pass Shellsort (including the increment sequence), 
and $P$ is a constant-size algorithm to process all the information 
and to output $\sigma_0$. We use a pairing function 
to encode the conditional in a single
natural number (Section~\ref{sect.prel}). 

Denote the minor sequence $n_{1,1}, \ldots, n_{n,p}$ by $S_n$.
The description of $S_n$ comprises the displacements 
$p(1)-1, p(2)-2, \ldots , p(n)-n$
from which the minor sequence can be extracted.
A computable description of $S_n$, given 
$n,A$ and $P$, 
requires at most
\begin{equation}\label{eq.descrN}
l(descr(S_n))= (\sum_{i=1}^n \sum_{k=1}^p  \log |n_{i,k}|) +D
\end{equation}
bits. Here $D$ is the number
of bits required to be able to parse the main part of
$descr(S_n)$ into its constituent parts.
By Claim~\ref{lem.descr} we can compute permutation $\sigma_0$ 
from $descr(S_n)$, given $n,A$ and $P$.
Hence 
\begin{equation}\label{eq.NC}
l(descr(S_n)) \geq C(\sigma_0|n,A,P).
\end{equation}
From \eqref{eq.compl} and \eqref{eq.NC} it follows that
\begin{equation}\label{eq.ub}
l(descr(S_n)) \geq n \log (n/8).
\end{equation}
\begin{claim}\label{claim.lbn}
\rm
Writing $h_0=n$ we have
\[
\sum_{i=1}^n \sum_{k=1}^p |n_{i,k}| = 
\Omega \left( n \sum_{k=1}^p h_{k-1}/h_k \right).
\]
\end{claim}
\begin{proof}
By Claim~\ref{claim.mn} item (ii) 
for every $i \in [1,n]$ and every pass $k \in [1,p]$ we have
$|n_{i,k}| < h_{k-1}/h_k$. Since
$\prod_{k=1}^p h_{k-1}/h_k= h_0/h_p =n$ we have by 
\eqref{eq.descrN} and \eqref{eq.ub} that
\begin{align}\label{eq.cd}
\frac{l(descr(S_n))}{n}& = 
\frac{\sum_{k=1}^p n(\log (h_{k-1}/h_k)-a_k)}{n}+ \frac{D}{n}
\\& = \log n - \sum_{k=1}^p a_k +\frac{D}{n}
\geq \log \frac{n}{8},
\nonumber
\nonumber
\end{align}
where $a_k= \log (h_{k-1}/h_k) - 1/n \sum_{i=1}^n \log |n_{i,k}|$ 
for $k=1,2, \ldots, p$ and $a_k > 0$ by 
Claim~\ref{claim.mn} item (ii). 
\begin{definition}
\rm
The {\em self-delimiting} encoding of string $x$ is $1^{|x|}0|x|x$. 
If the length of $x$ is equal $\log n$ then its self-delimiting 
encoding has length $\log n+2 \log \log n +1$. 
\end{definition}
For each $i \in [1,n]$ 
we have $|p(i)-i| < n$ (the displacement of a key cannot be as great or greater
than the length of the list) and 
the sequence $n_{i,1}, \ldots , n_{i,p}$ 
can be extracted from the self-delimiting encoding of $p(i)-i$
using the information in $D$. 
We now show that $D/n = o(\log n)$. 
\begin{itemize}
\item
The information $D$ accounts
for the at most $(2 \log \log n +1)$-length part of the self-delimiting
encoding of $|p(i)-i|$ ($i \in [1,n]$). 
\item
We require one time 
$O(1)$ bits for a self-delimiting program to extract the 
sequences $|n_{i,1}|, \ldots ,|n_{i,p}|$ 
from the $|p(i)-i|$ ($i \in [1,n]$).
The extraction can be done for all $i \in [1,n]$ by a single 
$O(1)$-bit program by
Claim~\ref{claim.mn} item  (iii).
\item
Since all $n_{i,1}, \ldots ,n_{i,p}$  have the same sign 
for each  $i \in [1,n]$ we require $O(n)$ self-delimiting bits to 
encode them all.
\end{itemize}
To parse $descr(S_n)$ 
it therefore suffices that the quantity $D \leq 2\sum_{i=1}^n \log \log n +O(n)
=  2n \log \log n +O(n)$.
The total of the description $D$ is $o(n \log n)$ bits.

Hence up to lower order terms the last inequality of \eqref{eq.cd}
is rewritten as $\sum_{k=1}^p a_k \leq 3$. 
Since $a_k > 0$ for every $k \in [1,p]$ we have 
$0 < a_k \leq 3$.
Writing $a_k$ out and reordering this gives up to lower order terms 
\[
\log (h_{k-1}/h_k) \leq 1/n \sum_{i=1}^n \log |n_{i,k}|+3,
\]
and by exponentiation of both sides of the inequality one obtains 
\[ 
h_{k-1}/h_k = O((\prod_{i=1}^n |n_{i,k}|)^{1/n}).
\]
By the inequality of the arithmetic
and geometric means and rearranging we obtain
$\sum_{i=1}^n |n_{i,k}| = \Omega( n h_{k-1}/h_k)$ for every $1 \leq k \leq p$.
Therefore, 
$N=\sum_{k=1}^p \sum_{i=1}^n |n_{i,k}| = \Omega (n \sum_{k=1}^p h_{k-1}/h_k)$.
\end{proof}

Since $\frac{1}{2}N \leq T$ by Claim~\ref{claim.mn} item (i),
a lower bound for $\frac{1}{2}N$ is also a lower bound for $T$. Therefore 
Claim~\ref{claim.lbn} proves the statement of the theorem for the particular
 $\sigma_0$. 

By Stirling's approximation $\log n! \approx n \log (n/e)+ \frac{1}{2} \log n+O(1)
\approx n \log n - 1.44n+ \frac{1}{2} \log n +O(1)$. Therefore
$n \log n - 1.5 n \leq \log n! \leq n \log n - n$ for large $n$. 
Therefore  
by \cite[Theorem 2.2.1]{LV08} which uses a simple counting argument, 
(see also Section~\ref{sect.prel}) at least
a $(1-2^{-n})$-fraction of all permutations $\sigma$ on $n$ keys 
satisfy \eqref{eq.compl}. 
Hence the desired lower bound
holds on the average (expected over the uniform distribution) 
number of inversions.
In fact, it holds for all permutations of $1,\ldots ,n$
with probability going to 1 with growing $n$.
\end{proof}

\begin{corollary}\label{cor.23}
\rm
Set $h_0=n$.
For $p=2$ with $h_1=n^{1/3}$, and $h_2=1$ this yields
\[
T = \Omega(n (n^{1-1/3}+n^{1/3}) = \Omega(n^{5/3}),
\]
which coincides with the best number of inversions for 2-pass Shellsort
$T= \Theta(n^{5/3})$ using the same increment sequence $h_1=n^{1/3}, h_2=1$ 
as given by \cite{Kn73}.

For $p=3$ with $h_1=n^{7/15}$, $h_2=n^{1/5}$, and $h_3=1$ this yields
\[
T=\Omega(n(n^{1-7/15}+n^{7/15 - 1/5}+n^{1/5}) = \Omega(n^{1+8/15}) = 
\Omega(n^{23/15}).
\]
The upper bound of $O(n^{23/15})$ 
for 3-pass Shellsort using the same increment sequence $h_1=\Theta(n^{7/15}), 
h_2=\Theta(n^{1/5}), h_3=1$ with the additional restriction that 
$\gcd(h_1,h_2)=1$ is given in \cite{JK97}. This reference uses a complicated
probabilistic analysis based on the still more complicated combinatorial
characterization in \cite{Yao80}. 
Together with the lower bound we establish the new fact that the
average number of inversions of 3-pass Shellsort with this increment
sequence is $\Theta (n^{23/15})$.

In \cite[Section 10]{JK97} it is conjectured that with $h_1 \approx n^{1/2}$
and $h_2 \approx n^{1/4}$ ($h_3=1$) one may obtain an average-case 
number of inversions of $O(n^{3/2})$. Using the theorem above shows that
$T= \Omega (n(n^{1-1/2}+n^{1/2- 1/4}+n^{1/4}) = \Omega(n^{3/2})$.
Therefore, if the conjecture on the upper bound is true then 
3-pass Shellsort has an average-case number of inversions of 
$\Theta(n^{3/2})$ for this increment sequence.
\end{corollary}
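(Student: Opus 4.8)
The plan is to instantiate Theorem~\ref{theo.shelllb}, whose lower bound $\Omega\left(n\sum_{k=1}^p h_{k-1}/h_k\right)$ with $h_0=n$ depends only on the increment sequence, for each of the three sequences in turn, and then to pair the resulting lower bound with a matching upper bound from the literature to pin down a $\Theta$ wherever such an upper bound is available. Each case reduces to evaluating the ratios $h_{k-1}/h_k$, identifying the dominant term of the sum, and multiplying by the leading factor $n$.

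First I would treat the $p=2$ case with $h_1=n^{1/3}$, $h_2=1$. Here $h_0/h_1=n^{2/3}$ and $h_1/h_2=n^{1/3}$, so the sum is $n^{2/3}+n^{1/3}$, dominated by $n^{2/3}$, and the theorem yields $T=\Omega(n\cdot n^{2/3})=\Omega(n^{5/3})$. Since Knuth's analysis in \cite{Kn73} already establishes $\Theta(n^{5/3})$ for this sequence, the new lower bound coincides with the known value, and nothing beyond this arithmetic is required.

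Second, for $p=3$ with $h_1=n^{7/15}$, $h_2=n^{1/5}$, $h_3=1$, I would compute the three ratios
\[
\frac{h_0}{h_1}=n^{8/15},\qquad \frac{h_1}{h_2}=n^{4/15},\qquad \frac{h_2}{h_3}=n^{3/15},
\]
observe that $n^{8/15}$ dominates, and conclude $T=\Omega(n^{1+8/15})=\Omega(n^{23/15})$. To upgrade this to a $\Theta$ I would invoke the Janson--Knuth upper bound $O(n^{23/15})$ from \cite{JK97} for the same sequence, which gives the new exact result $\Theta(n^{23/15})$. The identical computation applied to the conjectural sequence $h_1\approx n^{1/2}$, $h_2\approx n^{1/4}$, $h_3=1$ produces ratios $n^{1/2}$, $n^{1/4}$, $n^{1/4}$, hence $T=\Omega(n^{3/2})$; matched against the conjectured upper bound $O(n^{3/2})$ of \cite{JK97} this yields the conditional exact value $\Theta(n^{3/2})$, contingent on the conjecture.

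The one point requiring care, rather than a genuine obstacle, is that the cited upper bounds are stated for increment sequences specified only up to constant factors (and with the coprimality restriction $\gcd(h_1,h_2)=1$ in the $n^{23/15}$ case). I would note that the lower-bound formula is insensitive to replacing each $h_k$ by $\Theta(h_k)$, since then each ratio $h_{k-1}/h_k$ changes only by a constant factor and the identity of the dominant term is unaffected. This insensitivity guarantees that the lower and upper bounds genuinely refer to the same increment sequence, so that they may be legitimately combined into the claimed $\Theta$ statements.
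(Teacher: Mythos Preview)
Your proposal is correct and follows exactly the paper's own approach: the corollary has no separate proof in the paper, and the computations you carry out are precisely those displayed inline in the corollary's statement---plugging the given increments into the bound $\Omega\!\left(n\sum_{k=1}^p h_{k-1}/h_k\right)$ of Theorem~\ref{theo.shelllb} and then citing \cite{Kn73} and \cite{JK97} for the matching upper bounds. Your final paragraph on robustness under $\Theta(\cdot)$ perturbations of the increments is a small clarifying addition beyond what the paper spells out, but it does not constitute a different route.
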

\begin{corollary}\label{cor.log}
\rm
The increment sequence $h_1, \ldots, h_p$ with $p= \lfloor \log n \rfloor$
of Papernov and Stasevich in \cite{PS65}
is $h_1 = n/2+1, h_2 = n/2^2 +1, \dots, 
h_p=n/2^{\lfloor \log n \rfloor}+ 1$.
The worst-case number of inversions
reported by \cite{PS65} is $\Theta(n^{3/2})$. 
Since $h_{k-1}/h_k \approx 2$,
the theorem above gives
a lower bound on the average number of inversions of 
$T= \Omega (n \sum_{k=1}^{\Theta ( \log n)} \Omega (1))=
\Omega(n \log n)$.

The increment sequence of Hibbard \cite{Hi63}
with increment sequence $2^k-1$ until it passes $n$ has a worst-case
number of inversions $\Theta(n^{3/2})$. With a similar analysis
as before it gives a lower bound on the average-case of $T=\Omega( n \log n)$.
It is conjectured in \cite{We91} to lead to an average-case number of 
inversions of $O(n^{5/4})$ as reported in \cite{Kn73}. 
This conjecture is difficult to
settle empirically since for $n=100,000$ we 
have $\log n \approx n^{1/4}$.

Pratt's logarithmic increment sequence (one of his
``hypergeometric'' sequences) in \cite{Pr72} also
reported by \cite{Kn73} is 
$h_1, \ldots, h_p$ with $h_k=(3^k-1)/2$ not greater than $\lceil n \rceil$.
This increment sequence 
leads to a worst-case number of inversions of $\Theta(n^{3/2})$.
In this case $h_{k-1}/h_k \approx 3$ 
and the number of passes is $p=\log_3 n$.
The theorem above gives
a lower bound on the average number of inversions of 
$T= \Omega (n \sum_{k=1}^{\Theta (\log n)} \Omega (1))=
\Omega(n \log n)$.

The original
increment sequence used by Shell \cite{Sh59} was $\lfloor n/2\rfloor,
\lfloor n/2^2 \rfloor$, and so on for $\log n$ passes. 
Knuth \cite{Kn73} remarks that this is
undesirable when the binary representation of $n$ 
contains a long string
of zeroes and has the effect that Shellsort runs in worst-case time 
$\Theta(n^2)$. 
By the same analysis as given above for the 
Papernov-Stasevich increment sequence
the lower bound on the average number of inversions 
is $\Omega(n \log n)$. 

By \cite{JLV00} the average number of inversions of Shellsort
can be $\Theta(n \log n)$ only for
an increment sequence $h_1, \ldots, h_p$ with $p=\Theta(\log n)$.
We have shown here that the lower bound on the
average number of inversions is $\Omega(n \log n)$ for many increment sequences 
of this length.
It is an open problem whether it can be proved that for some such
increment sequence the average number of inversions is $O(n \log n)$. 
\end{corollary}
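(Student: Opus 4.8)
The plan is to obtain each of the four claimed $\Omega(n\log n)$ lower bounds as a direct specialization of Theorem~\ref{theo.shelllb}, whose conclusion is that the average number of inversions is $\Omega\left(n\sum_{k=1}^{p} h_{k-1}/h_k\right)$ under the convention $h_0=n$. For a given increment sequence it therefore suffices to check two things: (a) the number of passes satisfies $p=\Theta(\log n)$, and (b) every consecutive ratio $h_{k-1}/h_k$, including the boundary ratio $h_0/h_1=n/h_1$, is bounded below by a positive constant, i.e.\ $h_{k-1}/h_k=\Omega(1)$. Granting (a) and (b), the sum $\sum_{k=1}^{p} h_{k-1}/h_k$ is a sum of $\Theta(\log n)$ terms each $\Omega(1)$, hence is $\Omega(\log n)$, and the theorem yields the average-case lower bound $\Omega(n\log n)$.

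I would then treat the four sequences in turn, each reducing to a one-line ratio estimate. For Papernov--Stasevich, $h_k=\lfloor n/2^k\rfloor+1$ with $k=1,\dots,\lfloor\log n\rfloor$, so $p=\Theta(\log n)$ and $h_{k-1}/h_k\to 2$ (the $+1$ is lower order), with $h_0/h_1=n/(\lfloor n/2\rfloor+1)\to 2$; both ratios are $\Omega(1)$. Hibbard's sequence consists of the numbers $2^k-1$ not exceeding $n$, so $p=\Theta(\log n)$ and $h_{k-1}/h_k=(2^{k}-1)/(2^{k-1}-1)\to 2$, while $h_0/h_1=\Theta(1)$. Pratt's hypergeometric sequence $h_k=(3^k-1)/2$ not exceeding $n$ is the same story with limiting ratio $3$ and $p=\log_3 n=\Theta(\log n)$. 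Shell's original sequence $\lfloor n/2\rfloor,\lfloor n/4\rfloor,\dots$ coincides with Papernov--Stasevich up to the missing $+1$, so it satisfies the identical estimates; Knuth's observation about long runs of zeros in the binary expansion of $n$ bears only on the $\Theta(n^2)$ worst case and does not disturb either the count of passes or the ratios. Each case is then closed by a single invocation of Theorem~\ref{theo.shelllb}.

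I do not anticipate a real obstacle, since the corollary is a plug-in of the theorem; the only point that warrants a little care is the rounding in the sequence definitions. Concretely I would verify that the floors, and the integer forms $2^k-1$ and $(3^k-1)/2$, never collapse two consecutive increments to a ratio that fails to stay bounded away from $1$, and that they do not cut the count of distinct increments below $\Theta(\log n)$. For all four sequences $h_{k-1}>h_k$ holds strictly and the ratio lies in a fixed interval $[c,C]$ with $0<c\le C<\infty$, and the number of increments between $1$ and $n$ is $\Theta(\log n)$ by construction, so (a) and (b) both hold; the boundary term $h_0/h_1$ is likewise $\Theta(1)$ in each case and neither helps nor hurts. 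The concluding sentence of the corollary requires no argument: in view of the $\Omega(pn^{1+1/p})$ bound of \cite{JLV00}, which forces $p=\Theta(\log n)$ for any hope of an $O(n\log n)$ average, it simply records that it remains open whether some sequence of this length actually attains $O(n\log n)$ inversions on average.
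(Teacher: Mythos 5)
Your proposal is correct and follows essentially the same route as the paper: each sequence is handled by observing that the consecutive ratios $h_{k-1}/h_k$ (with $h_0=n$) are bounded below by a constant and that there are $\Theta(\log n)$ passes, so Theorem~\ref{theo.shelllb} immediately gives $\Omega(n\log n)$. Your extra care about the boundary ratio $h_0/h_1$ and the rounding in the integer forms of the increments is a slight refinement of, but not a departure from, the paper's argument.
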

\begin{corollary}\label{cor.dlog}
\rm
For  Pratt's square logarithmic increment sequence $h_1, \ldots, h_p$ 
with $p= \Theta( (\log n)^2)$,
the average-case number of inversions 
is lower bounded by
$T= \Omega (n \sum_{k=1}^{\Theta ((\log n)^2)} \Omega (1))=
\Omega(n (\log n)^2)$. The precise average-case number (and worst-case
number) of inversions 
is $\Theta(n (\log n)^2)$ in \cite{Pr72},
and therefore the lower bound is tight. 
\end{corollary}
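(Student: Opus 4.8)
The plan is to specialize the general bound of Theorem~\ref{theo.shelllb} to Pratt's square-logarithmic sequence; no new machinery is needed, only a count of the passes together with a crude lower bound on the telescoping sum $\sum_{k=1}^p h_{k-1}/h_k$. First I would pin down the sequence and its length. Pratt's square-logarithmic increments are the integers of the form $2^i3^j$ smaller than $\lfloor n/2\rfloor$, listed in strictly decreasing order and terminated by $h_p=1$. As already noted in the introduction, there are $p=\Theta((\log n)^2)$ such increments: taking logarithms, the pairs $(i,j)$ with $i,j\geq 0$ and $i\log 2+j\log 3\leq\log(n/2)$ are the lattice points of a right triangle with legs proportional to $\log n$, so their number is $\Theta((\log n)^2)$.

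Next I would lower-bound the sum appearing in Theorem~\ref{theo.shelllb}. Writing $h_0=n$, the increments satisfy $n=h_0>h_1>\cdots>h_p=1$ and are positive integers, so every consecutive ratio obeys $h_{k-1}/h_k>1$. Consequently
\[
\sum_{k=1}^p \frac{h_{k-1}}{h_k}\;\geq\;p\;=\;\Theta\!\left((\log n)^2\right),
\]
and one could equally invoke the telescoping identity $\prod_{k=1}^p h_{k-1}/h_k=h_0/h_p=n$ with the arithmetic–geometric mean inequality, which gives $\sum_k h_{k-1}/h_k\geq p\,n^{1/p}\geq p$ since $n^{1/p}\geq 1$; but the termwise bound already suffices. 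Substituting into Theorem~\ref{theo.shelllb} yields directly
\[
T=\Omega\!\left(n\sum_{k=1}^p \frac{h_{k-1}}{h_k}\right)=\Omega\!\left(n(\log n)^2\right).
\]

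Finally, for tightness I would invoke Pratt's matching worst-case (and hence average-case) upper bound of $O(n(\log n)^2)$ from \cite{Pr72}; combined with the lower bound just obtained this gives $T=\Theta(n(\log n)^2)$ and shows the bound is tight, which is exactly the content of the statement.

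The only subtle point — and the place where one might be tempted to overreach — is that here the consecutive ratios $h_{k-1}/h_k$ are \emph{not} bounded below by a fixed constant exceeding $1$, unlike the geometric sequences treated in Corollary~\ref{cor.log}; for large $n$ the quotient of two nearby $3$-smooth numbers can be arbitrarily close to $1$. The argument must therefore rely only on $h_{k-1}/h_k>1$ and on the count of passes being $\Theta((\log n)^2)$, rather than on any per-pass constant-factor gap. The one genuinely external ingredient is thus the lattice-point count establishing $p=\Theta((\log n)^2)$, which is standard and already underlies Pratt's original analysis; I expect this bookkeeping, rather than the application of the theorem, to be the main thing to get right.
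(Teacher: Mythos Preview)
Your proposal is correct and follows essentially the same route as the paper: apply Theorem~\ref{theo.shelllb}, observe that each ratio $h_{k-1}/h_k\geq 1$ so the sum is at least $p=\Theta((\log n)^2)$, and invoke Pratt's matching upper bound for tightness. Your explicit remark that the consecutive ratios are \emph{not} bounded away from~$1$ by a fixed constant is a worthwhile clarification the paper leaves implicit in its ``$\Omega(1)$'' notation.
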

\begin{corollary}\label{cor.jlv}
\rm
The theorem enables us to establish asymptotic 
lower bounds for $n \rightarrow \infty$ 
keys and $p$ passes. 
For example choose for a $p$-pass Shellsort the 
increment sequence with identical ratios between increments 
$h_1=n^{1-1/p}, h_2=n^{1-2/p}, \ldots , h_p=n^{1-p/p}=1$.
With $h_0=n$ the sum becomes $\sum_{k=1}^p h_{k-1}/h_k=pn^{1/p}$. 
The lower bound for $p$ passes
becomes $T=\Omega(pn^{1+1/p})$, that is, the lower bound of \cite{JLV00}.
This lower bound is a greatest
lower bound which holds for all increment sequences of $p$ passes. 
This can be seen as follows.
For increment sequences we express the increments 
as real powers of $n$.
If the ratios between successive increments are unequal then there is
one of those ratios which is  greater than other ratios.
If $h_{k_0-1}/h_{k_0}$ is such a  maximum ratio 
($k_0 \in [1,p]$) which means that for some $\epsilon > 0$
we have $h_{k_0-1}/h_{k_0}=n^{1/p+\epsilon}>n^{1/p}$, then  
the lower bound becomes $T=\Omega(nh_{k_0-1}/h_{k_0})) \neq
\Omega(pn^{1+1/p})$. 

We give an example of an increment sequence for $p=4$ which has a lower
bound greater than $\Omega(pn^{1+1/p})=\Omega(n^{5/4})$. 
We choose the increment sequence
$h_1=n^{11/16}, h_2=n^{7/16}, h_3=n^{3/16}, h_4=1$. The lower bound becomes
$T= \Omega(n\cdot (n^{1-11/16}+n^{11/16-7/16}+n^{7/16-3/16}+n^{3/16})=
\Omega(n^{1+5/16})=\Omega(n^{21/16}) \neq \Omega(n^{20/16})$. 
\end{corollary}

\section{Conclusion}
A first nontrivial general lower bound on the average-case number of 
inversions of Shellsort using
$p$ passes was given in \cite{JLV00}.
Here we gave a lower bound on the average-case complexity for
each increment sequence separately. 
The lower bound of above reference turns out to be the greatest
lower bound which holds for all increment sequences. In fact,
the lower bound given here seems to be possibly 
tight as follows from the corollaries.
A tantalizing
prospect is to obtain a lower bound for the best increment sequences 
expressed only in the number of keys to be sorted and the
number of passes in the sorting process, and which is tighter than
the lower bound in the quoted reference.

\section*{Acknowledgment}
I thank Ronald de Wolf for comments on a preliminary draft of this paper
and the referees for constructive suggestions.

{\small

}
\end{document}